\documentclass{ifacconf}

\usepackage{graphicx}      % include this line if your document contains figures
\usepackage{natbib}        % required for bibliography
\usepackage{amsmath, amssymb, amsfonts} % Nicer mathematical typesetting
\usepackage{xcolor} % Predefines additional colors and allows user defined colors
\usepackage{comment} %add block comments

\definecolor{matlabblue}{rgb}{0, 0.4470, 0.7410}
\definecolor{matlabred}{rgb}{0.8500, 0.3250, 0.0980}
\definecolor{matlabgray}{rgb}{0.5, 0.5, 0.5}
\definecolor{matlablg}{rgb}{0.75, 0.75, 0.75}
\definecolor{matlabyellow}{rgb}{0.9290, 0.6940, 0.1250}
\definecolor{matlabpurple}{rgb}{0.4940, 0.1840, 0.5560}
\definecolor{matlab_br_red}{rgb}{1, 0, 0}
\definecolor{matlab_br_blue}{rgb}{0, 0, 1}
\definecolor{matlabblack}{rgb}{0, 0, 0}

\begin{document}
\begin{frontmatter}

\title{Efficient Learning of Affine and Rational Dependency LPV Models With Linear Fractional Representation\thanksref{footnoteinfo}} 

\thanks[footnoteinfo]{This work is funded by the European Union (ERC, COMPLETE, 101075836), by the Air Force Office of Scientific Research under award number FA8655-23-1-7061, and the Mathworks Inc. Views and opinions expressed are however those of the author(s) only and do not necessarily reflect those of the European Union or the European Research Council Executive Agency. Neither the European Union nor the granting authority can be held responsible for them.}

\author[First]{Roel Drenth} 
\author[First]{Jan H. Hoekstra} 
\author[First]{Maarten Schoukens} 
\author[First,Second]{Roland T\'{o}th}

\address[First]{Control Systems Group, Dept. of Electrical Eng., Eindhoven University of Technology, Eindhoven, The Netherlands.(e-mail: \{r.drenth, j.h.hoekstra, m.schoukens, r.toth\}@tue.nl).}
\address[Second]{Systems and Control Laboratory, HUN-REN Institute for Computer Science and Control, Budapest, Hungary}
\begin{abstract} 
Identifying control-friendly models of nonlinear systems remains one of the major challenges at the intersection of system identification and control. The \emph{Linear Parameter-Varying} (LPV) framework offers a promising solution, but existing identification methods often rely on model structures with affine scheduling dependency. Instead, this work proposes the use of LPV models with \emph{Linear Fractional Representation} (LFR) admitting a rational scheduling-dependency, capable of modelling complex nonlinear systems with fewer scheduling variables compared to affine models. This work introduces a direct parameterization to ensure well-posedness of rational LPV-LFR models, which by joint-estimation of an LPV plant and scheduling map, using only input-output data, is capable of modelling complex nonlinear systems. Accuracy of the proposed approach is shown on two simulation examples.
\end{abstract}
\begin{keyword}
Linear Parameter-Varying System Identification, Linear Fractional Representation, Scheduling Map Estimation

\end{keyword}

\end{frontmatter}
%===============================================================================

\section{Introduction}
As the performance demands of systems in engineering grow, the limitations of the ubiquitous \emph{Linear Time-Invariant} (LTI) assumptions become apparent. Employing the LTI framework to model systems which are often inherently \emph{nonlinear} (NL) results in models with limited accuracy, which subsequently limits achievable performance in (model-based) control.

The \emph{Linear Parameter-Varying} (LPV) framework offers an attractive alternative to the LTI framework. LPV models retain a structure that is linear in the input, but the model parameters can vary as a function of a \emph{scheduling variable}. Through change in the scheduling variable, LPV models are capable of representing time-varying and nonlinear behaviour. This results in a model class which provides a middle-ground between the limitations of the LTI class and the complexity of nonlinear models. Importantly, the LPV structure enables controller synthesis tools with performance guarantees (see e.g., \cite{mohammadpour2012}). Moreover, the LPV framework enjoys the support of various software toolboxes, implementing powerful control and modelling approaches \citep{hjartarson2015, boef2021}.

Traditional LPV identification methods, which require knowledge of the scheduling signal, are often extensions of their LTI counterparts. Notable methods include \emph{Prediction Error Minimization} (PEM)  methods (see e.g., \cite{toth2010}), and \emph{Subspace Identification} (see e.g., \cite{vanwingerden2009}, \cite{cox2021}). In recent years, methods which jointly identify an LPV model and its scheduling map have been gaining attention. By estimating the scheduling map directly from data, the often significant challenge of modelling the scheduling is bypassed.  In particular, \emph{deep-learning}-based identification methods with LPV-\emph{state-space} (SS) models and \emph{Neural Network} (NN) scheduling maps have been shown to efficiently estimate LPV/NL models \citep{verhoek2022} and noise processes \citep{bemporad2025b}.

LPV identification methods have been developed for a wide range of model classes, including LPV-\emph{Input-Output} (IO), LPV-SS and LPV-\emph{Linear Fractional Representation} (LFR) classes. However, the LPV-SS and LPV-LFR model classes are most commonly used for controller synthesis. For LPV-SS models, controller synthesis methods require affine dependency on the scheduling variables. Using a simple affine-dependency structure to model complex nonlinearities often comes with the trade-off of requiring increased number of scheduling variables and a complex scheduling map, which in turn can introduce unnecessary conservatism in controller synthesis, called overbounding. Alternatively, we consider the LPV-LFR class of models. This class admits rational dependency on the scheduling variable which can reduce the required dimension and complexity of the scheduling map and thus the effects of overbounding (see e.g., \cite{hoffmann2014} for a comparison). Importantly, this model class is still suited for controller synthesis  (\cite{scherer2001}). However, identification of LPV-LFR models corresponding to rational-dependency is a significantly more challenging problem as the rational terms can introduce singularities.

So far, little research on the identification of LPV-LFR models has been performed, and the methods that do exist face significant challenges, such as requiring full-state measurements (\cite{nemani1995}), being restricted to affine-dependency (\cite{mejari2019,bianchi2010}), or assuming models cannot be ill-posed (\cite{lee1999}, \cite{cheng2015}). In particular, to the best of the authors' knowledge, no methods to ensure well-posedness in identification of LPV-LFR models under rational scheduling dependency exist. Furthermore, no research on the use of LPV-LFR models in the aforementioned joint scheduling and model identification setting has been performed.

In this work we leverage the efficiency of \emph{deep-learning}-based joint-estimation methods with the expressivity of rational LPV-LFR models to estimate nonlinear models. Specifically, as main contribution, we provide a direct parameterization of LPV-LFR models that guarantees well-posedness, inspired by direct parameterizations for stability of recurrent models (\cite{revay2020}, \cite{verhoek2023}).

The remainder of this paper first discusses the problem setting, including the model structure, in Section \ref{sec: ident_setting}. Next, in Section \ref{sec: WP}, well-posedness conditions and a direct-parameterization satisfying such conditions is introduced. This is followed by the introduction of the identification methodology in Section \ref{sec: ident_method}. Finally, the proposed method is evaluated on simulation examples in Section \ref{sec: sim_examples}, after which conclusions are drawn in Section \ref{sec: conclusion}.
\section{Problem Setting} \label{sec: ident_setting}
\subsection{Data-generating system}
We consider \emph{discrete-time} (DT) data-generating systems with the following state-space representation:
\begin{equation}
\label{NLsys_eq}
\Sigma_\mathrm{o}: \left\{
\begin{aligned}
x_{k+1}  &= f(x_k,u_k),\\
y_k &= g(x_k,u_k) + e_k,
\end{aligned}
\right.
\end{equation}

where $k \in \mathbb{Z}$ is the discrete time, where $x_k \in  \mathbb{X}\subseteq \mathbb{R}^{n_\mathrm{x}}$ is the state of the plant,  $u_k \in \mathbb{U} \subseteq \mathbb{R}^{n_\mathrm{u}}$ is the input signal, $y_k \in \mathbb{Y} \subseteq \mathbb{R}^{n_\mathrm{y}}$ is the system output, with output-additive white noise noise  $e_k \in \mathbb{E} \subseteq \mathbb{R}^{n_\mathrm{y}}$.  $u_k$ is assumed to be quasi-stationary. The state transition function $f: \mathbb{X} \times \mathbb{U} \!\to\! \mathbb{X}$ and output function $g: \mathbb{X} \times \mathbb{U} \!\to\! \mathbb{Y}$ are considered to be deterministic and nonlinear.

\subsection{Linear Fractional Representation}
The \emph{Linear Fractional Representation} is a general model representation that allows one to model dynamical systems with a wide range of properties, such as nonlinearities and uncertainty, in a unified manner. It originally gained popularity in the field of robust control (\cite{zhou1996}) where it is used to model uncertain systems. For LPV systems it is instead used to incorporate the parameter-varying nature. With an LFR, LPV systems are represented by interconnections between nominal systems $M$ and their parameter-varying components represented by matrix functions $\Delta(p)$ with linear dependence on the scheduling signal $p$ \citep{schererLMIbook}.

Let us define an LPV-LFR as a pair $\{M, \Delta(p)\}$ where the interconnection in discrete time follows the standard form:
\begin{equation}
\label{LPVLFR_eq}
\begin{aligned}
M: \left\{
\begin{aligned}
\begin{bmatrix}
x_{k+1} \\
z_k\\
y_k
\end{bmatrix} &= \begin{bmatrix}
A_x & B_w & B_u\\
C_z & D_{zw} & D_{zu}\\
C_y & D_{yw} & D_{yu}
\end{bmatrix}
\begin{bmatrix}
x_k\\
w_k\\
u_k
\end{bmatrix}\\
\end{aligned}
\right.,\\
w_k = \Delta(p_k)z_k, \qquad \qquad \quad \ 
\end{aligned}
\end{equation}  
where $w_k, z_k \in \mathbb{R}^{n_\mathrm{w}}$ are the latent variables, $A_x, \dots D_{yu}$ are real matrices of appropriate dimensions, and $p_k \in \mathbb{P} \subseteq\mathbb{R}^{n_\mathrm{p}}$ is the scheduling variable. The matrix function $\Delta(p_k): \mathbb{P} \to \mathbb{R}^{n_\mathrm{w} \times n_\mathrm{w}}$, commonly referred to as the Delta-block, is typically assumed to be diagonal and can be defined as:
\begin{equation}
\label{Delta_structure}
    \Delta(p_k) = \begin{bmatrix}
        p_{1,k}I_{\eta_1} &   & 0\\
         & \ddots & \\
        0 &  & p_{n_\mathrm{p},k}I_{\eta_{n_\mathrm{p}}}
    \end{bmatrix},
\end{equation}
where $\eta$ is a vector of integers denoting the number of repetitions per scheduling variable and where $I_{\eta_1},\dots,I_{\eta_{n_\mathrm{p}}}$ are identity matrices of appropriate size to realize the correct total dimension of  $\Delta\left(p_k\right)$. This interconnection is equivalent to a LPV-SS representation with rational dependency on $p_k$:
\begin{equation}
\begin{bmatrix}
    x_{k+1}\\
    y_k
\end{bmatrix} = 
\begin{bmatrix}
\mathcal{A}(p_k) & \mathcal{B}(p_k)\\
\mathcal{C}(p_k) & \mathcal{D}(p_k)
\end{bmatrix}
\begin{bmatrix}
    x_k\\
    u_k
\end{bmatrix}
,
\end{equation} where $\mathcal{A}(p_k), \mathcal{B}(p_k), \mathcal{C}(p_k), \mathcal{D}(p_k)$ are matrix functions of $p_k$.

This is achieved by eliminating the latent variables $w_k$ and $z_k$ in \eqref{LPVLFR_eq} by substitution of
\begin{equation}
\label{LFR_latent_subs}
z_k = (I-D_{zw}\Delta(p_k))^{-1}(C_zx_k+D_{zu}u_k).
\end{equation}
Which requires $(I-D_{zw}\Delta(p_k))$ to be non-singular. This gives rise to the following definition:
\begin{defn} \label{WP_defn} The LPV-LFR model defined by the pair $\{M, \Delta(p_k)\}$ according to \eqref{LPVLFR_eq} is \emph{well-posed}, if $I - D_{zw}\Delta(p_k)$ is non-singular, i.e., $\det(I-D_{zw}\Delta(p_k)) \!\neq\! 0$ for all possible realizations of the scheduling variable, i.e., $p_k \in \mathbb{P}$.
\end{defn}
As this representation admits rational LPV-SS models, it is a generalization of commonly used affine LPV-SS models. Notably, affine-dependency models are represented by taking \eqref{LPVLFR_eq} with $D_{zw} \!=\! 0$, resulting in a model set which is always well-posed. Furthermore, given an LPV-SS model with affine dependency an equivalent LPV-LFR realization can easily be recovered using \emph{Singular Value Decomposition} (SVD) methods \citep{zhou1996}.

LPV-LFR models can be used to represent LPV systems by considering $p_k$ to be an exogeneous signal. Nonlinear systems such as $\Sigma_\mathrm{o}$ in \eqref{NLsys_eq} require defining $p_k = \psi(x_k, u_k)$ where  $\psi : \mathbb{R}^{n_\mathrm{x}} \times \mathbb{R}^{n_\mathrm{u}} \to \mathbb{R}^{n_\mathrm{p}}$ is a \emph{scheduling map}, corresponding to a LPV embedding of the original nonlinear system.

\subsection{Identification structure}
We aim to identify the data-generating NL system $\Sigma_\mathrm{o}$ based on a dataset $\mathcal{D}_N$,  in the form of parameterized surrogate model. Specifically, we consider \emph{self-scheduled} LPV-LFR models, requiring the joint-estimation of LPV model and scheduling map, similar to \cite{verhoek2022}, which treats the LPV-SS setting.

We parameterize the surrogate model for the system \eqref{NLsys_eq} in an LPV-LFR form according to:
\begin{equation}
\label{LPVLFR_plant}
G_\theta: \left\{
\begin{aligned}
\begin{bmatrix}
\hat{x}_{k+1} \\
\hat{z}_k\\
\hat{y}_k
\end{bmatrix} &=
\begin{bmatrix}
A_x & B_w & B_u\\
C_z & D_{zw} & D_{zu}\\
C_y & D_{yw} & D_{yu}
\end{bmatrix}
\begin{bmatrix}
\hat{x}_k\\
\hat{w}_k\\
u_k
\end{bmatrix},\\
\hat{w}_k &= \Delta(\hat{p}_k)\hat{z}_k,
\end{aligned}
\right.
\end{equation}

where the matrices $A_x,\dots, D_{yu}$ form the parameter vector $\theta_G$. Next, we consider the scheduling signal $\hat{p}_k$ to be defined by a scheduling map $\hat{\psi}_\theta(x_k, u_k, d_k)$, where $d_k \in \mathbb{R}^{n_\mathrm{d}}$ is a known exogenous signal. The inclusion of $d_k$ allows for full or partial external scheduling. To be able to capture the set of data-generating systems described in \eqref{NLsys_eq}, the scheduling map is required to be a general function approximator. To achieve this, we propose the use of a \emph{Feedforward Neural Network} (FNN) parameterization of the scheduling map. Typically the FNN is extended with a linear bypass layer, resulting in a one-block ResNet \citep{he2016}. This bypass aids in the learning of linear dependencies on $x_k, u_k$ and $d_k$ if present. The weights and biases of the ResNet form the parameter vector $\theta_\psi$. For mildly nonlinear systems, the authors suggest the use of a ResNet with two hidden layers, limited amount of neurons per layer (6-12) and hyperbolic tangent (tanh) activation functions.
\vspace{-5pt}
\section{Ensuring well-posedness} \label{sec: WP}
\subsection{Well-posedness conditions}
As we have shown in \eqref{LFR_latent_subs}, for an LPV-LFR to be well-posed, we require $I - D_{zw}\Delta(p_k)$ to be non-singular for all possible realizations of the scheduling variables, i.e., $p_k \in \mathbb{P}$. We propose a novel method for guaranteeing well-posedness of LPV-LFR models, we begin by establishing key assumptions and conditions that play a role in this result.
\begin{assum}
\label{assumption_2}
The matrix function $\Delta(p_k)$ has a diagonal structure as defined in \eqref{Delta_structure}.
\end{assum}

\begin{assum} \label{assumption_3} The scheduling set\footnote{One could also assume constraints directly on the scheduling map, however by constraining the scheduling set the results also hold for externally scheduled LPV models.} $\mathbb{P}$ is contained in the closed $n_\mathrm{p}$-dimensional $\ell_\infty$ unit ball, i.e.,
\begin{equation}
    \mathbb{P} \subseteq \{p:\mathbb{Z}\to \mathbb{R}^{n_\mathrm{p}} \mid \|p_k||_\infty \leq 1\ \forall k \in \mathbb{Z}\}.
\end{equation}
\end{assum}
\label{remark_scaleshift}
It is important to note that Assumption \ref{assumption_3}
is not restrictive. Given an arbitrary, well-posed, LFR $\{G, \Delta(p) \}$, under Assumption \ref{assumption_2}
it is trivial to retrieve an equivalent LFR defined by the pair $\{ \bar{M}, \bar{\Delta}(p) \}$ which satisfies Assumption \ref{assumption_3}
by scaling and shifting the scheduling variables (see e.g., \cite{schererLMIbook}). For models with NN scheduling map this can be enforced through the incorporation of an activation function on the output layer, such as the $\tanh$ function. 
\begin{defn} The \emph{spectral radius} $\rho$ of a matrix $A \in \mathbb{R}^{n \times n}$ with eigenvalues $\lambda_1, \dots \lambda_n$ is defined as:
\begin{equation}
        \rho(A) = \max\{|\lambda_1|, \dots, |\lambda_n|\}.
\end{equation}
\end{defn}
\begin{cond}
\label{condition_5} 
The spectral radius $\rho$ of $D_{zw}$ is strictly smaller than 1, i.e.,    
\begin{equation}
    \rho(D_{zw}) < 1.
\end{equation}
\end{cond}

These lead to the following theorem.

\begin{thm} \label{theorem: WP theorem}
If Assumptions  \ref{assumption_2}, \ref{assumption_3}, and Condition \ref{condition_5}
hold, the LPV-LFR model \eqref{LPVLFR_eq} is well-posed.
\end{thm}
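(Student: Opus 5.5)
The plan is to reduce well-posedness in the sense of Definition \ref{WP_defn} to a spectral statement about the product $D_{zw}\Delta(p_k)$. For any square matrix $X$, the eigenvalues of $I - X$ are exactly $1-\lambda_i(X)$. Hence $\det(I - X) = \prod_i (1-\lambda_i(X))$ vanishes if and only if $1$ is an eigenvalue of $X$. It therefore suffices to show that
\begin{equation*}
\rho\bigl(D_{zw}\Delta(p_k)\bigr) < 1 \quad \text{for all } p_k \in \mathbb{P},
\end{equation*}
since then no eigenvalue of $D_{zw}\Delta(p_k)$ equals $1$. So the first step is this eigenvalue-shift observation, and the rest of the argument aims at the displayed spectral-radius bound using Assumptions \ref{assumption_2} and \ref{assumption_3} together with Condition \ref{condition_5}.

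Next I would use the structure of $\Delta(p_k)$. By Assumption \ref{assumption_2}, $\Delta(p_k)$ is diagonal with entries $p_{i,k}$, and by Assumption \ref{assumption_3} each satisfies $|p_{i,k}|\le 1$. In the single-scheduling-variable case, or more generally whenever all $p_{i,k}$ coincide so that $\Delta(p_k) = p_k I$, the bound is immediate:
\begin{equation*}
\rho\bigl(p_k D_{zw}\bigr) = |p_k|\,\rho(D_{zw}) \le \rho(D_{zw}) < 1 .
\end{equation*}
This case settles the theorem outright, and I would write it down first.

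For the genuinely multi-block case I would try the following route. Since $\rho(D_{zw})<1$, for every $\varepsilon>0$ there is an induced norm $\|\cdot\|_\star$ with $\|D_{zw}\|_\star \le \rho(D_{zw})+\varepsilon < 1$. Suppose this norm can be chosen so that every diagonal matrix with entries in $[-1,1]$ has $\|\cdot\|_\star\le 1$. Then submultiplicativity gives
\begin{equation*}
\rho\bigl(D_{zw}\Delta(p_k)\bigr)\le \|D_{zw}\|_\star\,\|\Delta(p_k)\|_\star < 1 .
\end{equation*}
Such norms include, for example, weighted $\ell_\infty$ or $\ell_2$ norms with diagonal weights $T$, i.e.\ $\|T D_{zw} T^{-1}\|<1$. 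A Jordan-form similarity only yields this when it can be taken diagonal, or when $D_{zw}$ is normal.

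This third step is the main obstacle. The spectral radius is not submultiplicative, so $\rho(D_{zw}\Delta)\le\rho(D_{zw})\rho(\Delta)$ cannot be invoked directly. The norm that realises $\rho(D_{zw})$ need not be compatible with arbitrary diagonal $\Delta$. I would check carefully whether Condition \ref{condition_5} alone suffices for multiple distinct $p_{i,k}$, by testing small $2\times 2$ examples in which $\det(I-D_{zw}\Delta)$ is tracked along a path from $\Delta = I$ to $\Delta=\mathrm{diag}(1,-1)$. If its sign changes along such a path, then Condition \ref{condition_5} is not sufficient on its own in the multi-block case. In that situation the argument would go through only for the repeated-scalar structure, or under the diagonally scaled contraction described above, which is presumably what the paper's direct parameterization enforces.
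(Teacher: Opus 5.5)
Your reduction to $\rho(D_{zw}\Delta(p_k))<1$ is correct, and so is your proof of the repeated-scalar case $\Delta(p_k)=p_kI$. The gap you flag in the multi-block case is real and cannot be closed, because for $n_\mathrm{p}\ge 2$ the statement is false. Run the test you propose with $n_\mathrm{p}=2$, $\eta=(1,1)$ and
\[
D_{zw}=\begin{bmatrix}1&1\\-1&-1\end{bmatrix},\qquad D_{zw}^2=0,\qquad \rho(D_{zw})=0 .
\]
Then $D_{zw}\Delta(p)$ has rank one with trace $p_1-p_2$, so $\det\bigl(I-D_{zw}\Delta(p)\bigr)=1-p_1+p_2$. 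This equals $1$ at $\Delta=I$ and $-1$ at $\Delta=\mathrm{diag}(1,-1)$, and it vanishes at $p=(\tfrac12,-\tfrac12)$, well inside the unit ball. Replacing $D_{zw}$ by $D_{zw}+\varepsilon I$ with small $\varepsilon>0$ gives an invertible counterexample, since the sign change persists by continuity.

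The paper's proof does not get around this. It asserts that under Assumption~\ref{assumption_2} one has $\rho(D_{zw}\Delta(p_k))\le\rho(D_{zw})\rho(\Delta(p_k))$. That is exactly the submultiplicativity you correctly declined to use; in the example it would read $1\le 0$. The proof then concludes $I-D_{zw}\Delta(p_k)\succ 0$, which is also unjustified for a nonsymmetric matrix; a valid bound $\rho<1$ would only give $\det>0$. So the theorem as stated holds precisely in the case you proved, $n_\mathrm{p}=1$ (or more generally $\Delta(p_k)=p_kI$). That covers the rational NL-MSD model, but not, via this theorem, the CMG models with $n_\mathrm{p}=3$.

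Your proposed repair via a diagonally scaled contraction is the right one, and it is in fact what the parameterization \eqref{construction_N} delivers. Take the diagonal scaling $T=\Psi^{-1/2}$. Then
\[
TD_{zw}T^{-1}=e^{-\tilde N},\qquad \tilde N=\Psi^{1/2}\bigl(D_A^\top D_A+\epsilon I\bigr)\Psi^{1/2}+\Psi^{1/2}\bigl(D_B-D_B^\top\bigr)\Psi^{1/2}.
\]
The symmetric part of $\tilde N$ satisfies $\succeq\mu I$ with $\mu=\epsilon\min_i e^{D_{d,i}}>0$, so the logarithmic-norm bound gives $\|TD_{zw}T^{-1}\|_2\le e^{-\mu}<1$. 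Because $T$ is diagonal it commutes with $\Delta(p_k)$, hence
\[
\rho(D_{zw}\Delta(p_k))\le\|TD_{zw}T^{-1}\|_2\,\|\Delta(p_k)\|_2<1 \quad\text{for all } p_k\in\mathbb{P}.
\]
A correct version of the theorem therefore replaces Condition~\ref{condition_5} with a stronger hypothesis: there exists an invertible $T$ commuting with every $\Delta(p)$ (diagonal, or block-diagonal conforming to $\eta$) such that $\|TD_{zw}T^{-1}\|_2<1$. Under that hypothesis your third step goes through verbatim.
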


\begin{pf} From Assumptions \ref{assumption_2} and \ref{assumption_3} it follows that \\$\rho(\Delta(p_k)) \leq 1, \ \forall p_k \in \mathbb{P}$. Additionally, under Assumption \ref{assumption_2}, it holds that  $ \rho(D_{zw}\Delta(p_k))\leq\rho(D_{zw})\rho(\Delta(p_k))$. Thus, it follows that $I -D_{zw}\Delta(p_k) \succ 0$ for all combinations of $ p_k \in \mathbb{P}$. This implies $\det(I-D_{zw}\Delta(p_k)) > 0, \ \forall p_k \in \mathbb{P}$, satisfying the well-posedness condition. 
\end{pf} 
\subsection{Well-posedness by direct parameterization}
To ensure Condition \ref{condition_5} is satisfied, we define
\begin{equation}
\label{Dzw_definition}
D_{zw} = e^{-N}, \quad N \succ 0.
\end{equation} As the matrix exponential maps  eigenvalues $\lambda$ to $e^\lambda$, the originally negative eigenvalues of $-N$ are mapped to eigenvalues of $D_{zw}$ that are strictly smaller in magnitude than one. Based on \eqref{Dzw_definition} we propose direct parameterization satisfying $N \succ 0$, enabling unconstrained optimization.

To guarantee Condition \ref{condition_5}
through $D_{zw}$ defined according to \eqref{Dzw_definition} with $N \succ 0$, we use the parameterization
\begin{equation}
    \label{construction_N}
    N = \Psi(D_A^\top D_A + D_B-D_B^\top + \epsilon I),
\end{equation}

where the matrix $\Psi \!=\! \textrm{diag}(e^{D_d})$ such that $N$ is constructed with the free variables  $D_A \in \mathbb{R}^{n_\mathrm{w}\times n_\mathrm{w}}, D_B \in \mathbb{R}^{n_\mathrm{w}\times n_\mathrm{w}} $ and $D_d \in \mathbb{R}^{n_\mathrm{w}}$ and a constant $0 < \epsilon \ll 1$. By combining a symmetric positive-semi-definite term $D_A^\top D_A$, a skew-symmetric term $D_B-D_B^\top$ and a strictly positive definite term $\epsilon I$, with positive scaling by $\Psi$, $N$ is guaranteed to be strictly positive-definite by construction. The positive scaling term $\Psi$ is shown in \cite{revay2020} to extend the feasible domain of $N$. Additionally, it aids in counteracting potential poor scaling of variables which can be introduced through the transformation \eqref{Dzw_definition}.

When considering the use of LPV-LFR models in identification, the parameterization of $N$ according to \eqref{construction_N} enables unconstrained optimization without risk of obtaining ill-posed models. By substituting the direct free parameterization of $D_{zw}$ with the transformed free variables in $D_{A}, D_B$ and $D_d$ for constructing $D_{zw}$, the number of parameters is increased from $n_\mathrm{w}^2$ to $2n_\mathrm{w}^2 +n_\mathrm{w}$. However, as noted in \cite{winston2020}, by populating only the lower triangular elements of $D_A$ and the strictly upper triangular elements of $D_B$, this overparameterization is greatly reduced, now requiring only $n_\mathrm{w}^2 + n_\mathrm{w}$ elements, $n_\mathrm{w}$ variables more than populating $D_{zw}$ directly.

\section{Identification Method} \label{sec: ident_method}
\subsection{Parameter Estimation}
We aim to estimate a parametric surrogate model using LFR parameterization \eqref{LPVLFR_plant}, given a dataset $\mathcal{D}_N$ by minimizing the criterion:
\begin{equation}
\label{datafit_cost}
V_N^\mathrm{fit}(\theta,\mathcal{D}_N) = \frac{1}{N}\sum^{N-1}_{k=0}\|\hat{e}_{k|\theta}\|_2^2,
\end{equation}
where $\hat{e}_{k|\theta} = y_k - \hat{y}_{k|\theta}$ represents the prediction error given the parameters $\theta$. By choosing the model structure in \eqref{LPVLFR_plant}, with \emph{Output-Error} (OE) noise model, the prediction error equals the simulation error.

The parameter vector $\theta$ is comprised of the combined parameter vectors of the plant and scheduling map, $\theta = [\theta_G^\top  \ \theta_\psi^\top]^\top $. The minimization of \eqref{datafit_cost}  matches the classical PEM criterion which aims to minimize the \emph{Mean Squared Error} (MSE) of the predictions. This leads to the following optimization problem:
\begin{align}
    \underset{\theta}{\mathrm{argmin}} \quad &  V_N^\mathrm{fit}(\theta,\mathcal{D}_N) + \rho\|\theta\|_2^2,\notag\\
    \mathrm{s.t.} \quad &  
      x_0 \!=\! \theta_{x_0}, \notag \\
     &\hat{p}_k \!=\! \hat{\psi}(\hat{x}_k,u_k,d_k,\theta),\\
     & \hat{x}_{k+1} \!=\!  \mathcal{A}(\Delta(\hat{p}_k),\theta)\hat{x}_k \!+\! \mathcal{B}(\Delta(\hat{p}_k),\theta)u_k,\notag\\    
     & \hat{y}_k  \!= \mathcal{C}(\Delta(\hat{p}_k),\theta)\hat{x}_k \!+\! \mathcal{D}(\Delta(\hat{p}_k),\theta)u_k,\notag\\
     & \hat{e}_k  \!=\! y_k - \hat{y}_k, \notag
\end{align}
where $\theta_{x_0}$ represents the initial state as additional decision variable and where  $\rho\|\theta\|_2^2$ is added as an $\ell_2$-regularization term. 
To solve this optimization problem, we leverage the approach described in \cite{bemporad2025a} based on JAX (\cite{jax2018github}). In this gradient-based method, local optimization is performed in two steps. First, a fixed number of iterations is taken using the Adam algorithm
, followed by a number of iterations using the \emph{bound-constrained limited-memory Broyden-Fletcher-Goldfarb-Shanno} (L-BFGS-B) algorithm (\cite{liu1989}). The use of Adam, through the incorporation of \emph{momentum}, aids in escaping local minima at the early iterations of the optimization process, providing a better point of initialization for the L-BFGS-B optimizer, which subsequently converges accurately to the final minimum of the optimization process. This two-step approach has been shown to be computationally efficient for the identification of NL-SS systems (\cite{bemporad2025a}) as well as self-scheduled LPV-SS models (\cite{bemporad2025b}), particularly when compared with methods using Adam exclusively.

\subsection{Parameter Initialization}
When starting the optimization procedure, initial values for all elements in the parameter vector are required. This is done in a pseudo-random manner. The elements of the plant are initialized as follows:
\begin{equation}
\label{plant_random_init}
\begin{array}{ll}
    A_x = 0.5I, \quad  & B_w = \mathbf{0}, \\
    B_u \sim  \mathcal{U}(-0.1, 0.1), \quad &  C_z \sim \mathcal{U}(-0.1, 0.1),\\
    D_{zu} \sim \mathcal{U}(-0.1, 0.1), \quad  & C_y \sim \mathcal{U}(-0.1, 0.1), \\
    D_{yw} = \mathbf{0}, \quad &D_{yu} = \mathbf{0}.
\end{array}
\end{equation}
If affine LPV-LFR models are used, $D_{zw}$ is left out of the parameter vector, fixed at a zero matrix. For well-posed rational LPV-LFR models,
we construct the $D_{zw}$ matrix by drawing lower triangular elements from a uniform distribution $\mathcal{U}(0 , 0.1)$ for the matrix $D_A$, the strictly upper triangular elements of $D_B$ follow the normal distribution $\mathcal{N}(1,1)$, finally $D_d = \mathbf{0}$ is chosen to initialize the remainder of the parameter vector. Lastly, to initialize the parameters of the scheduling map, represented by a ResNet, uniform Xavier initialization (\cite{glorot2010}) is used. Bias terms are initialized as zero.

\section{Simulation Examples} \label{sec: sim_examples}

To evaluate the performance of the identified models, we consider two simulation examples. First, a \emph{Nonlinear Mass-Spring-Damper} (NL-MSD) system is considered to evaluate the trade-off of rational-scheduling dependency. Secondly, we consider a \emph{Control Moment Gyroscope} (CMG), previously used to evaluate performance of joint-estimation methods, enabling comparison of the proposed methodology to existing methods. As a performance measure, the \emph{Best Fit Rate} (BFR) is considered, defined as:
\begin{equation}
\label{BFR}
   \mathrm{BFR} = \max\left\{1- \frac{\sum^{N-1}_{k=0}\|y_k - \hat{y}_k\|_2}{\sum^{N-1}_{k=0}\|y_k - \bar{y}\|_2},\  0 \right\}\cdot 100\%,
\end{equation}
where $\hat{y}_k$ is the estimated output and $\bar{y}$ is the sample mean of $y$.

The LPV-LFR models of both examples are trained according to the same optimization strategy proposed in Section \ref{sec: ident_method}. Leveraging the efficiency of the JAX-based method, all models are trained for 100 random initializations on an Intel 14700K CPU, and the model with the highest BFR on validation data is selected.

\subsection{Nonlinear Mass-Spring-Damper}
To illustrate the trade-off between affine and rational-dependency LPV-LFR models  we show a numerical example based on a DT NL-MSD system described by:
\begin{align}
\label{DT_NL_MSD_eq}
x_{1,k+1} \!&=\! x_{1,k} \!+\! T_sx_{2,k},\notag\\
x_{2,k+1} \!&=\! x_{2,k} \!+\! \tfrac{T_s}{m}\left(u_k \!-\! \tfrac{3}{5}x_{1,k}u_k \!-\! k_1 x_{1,k} \!-\! k_2x_{1,k}^3 \!-\! d_1 x_{2,k} \right),\notag\\
y_k \!&=\! x_{1,k},
\end{align}
with $T_s \! =\! 0.1$, $m \!= \! 1$, $k_1 \!=\! 0.1$, $k_2 \!=\! 1$, $d_1 \! = \! 1$ and where $x_1$ represents the position of the system and $x_2$ its velocity.
This plant features nonlinearities dependent on $x_1$ and $x_1^2$, requiring two scheduling variables for affine LPV representations or one scheduling variable for rational dependency models.

\emph{The Data:   }
Using the DT system, we generate three datasets, one training, validation and test set. Each dataset is generated by exciting the system \eqref{DT_NL_MSD_eq} with a different realization of a random phase multisine signal. The training and validation input signals are $N = 6 \cdot10^3$ samples long, the test set is longer, consisting of $N 
\!=\! 3 \cdot 10^4$ samples. Each input has frequency components from $\tfrac{1}{6}$ Hz to $5$ Hz. The input signals have amplitudes with standard deviation $\sigma_u = 4$ N. The output of \eqref{DT_NL_MSD_eq} has additive white noise $\mathcal{N}(0,\sigma_e^2)$ with $\sigma_e^2 = 0.063$, resulting in a SNR of approximately $20$ dB.

\emph{Model Settings:   }
To compare the performance of rational and affine-dependency based LPV-LFR models, we train such models using the well-posed parameterization with ($n_\mathrm{x} \!=\! 2, n_\mathrm{p} \!=\! 1, \eta \!=\! 3$), matching the required order necessary for rational models. Subsequently we train affine models with ($n_\mathrm{x} \!=\! 2, n_\mathrm{p} \!=\! 2, \eta \!=\! 1$) matching the theoretically required dimensions for affine models. On this example, linear scheduling maps with saturation: 
\begin{equation}
\label{eq: MSD_schedmap}
    \hat{\psi}(x,u) = \tanh(W_\mathrm{x}x + W_\mathrm{u}u + b),
\end{equation}
 are considered. Here $W_\mathrm{x}, W_\mathrm{u},b$ are free variables.\ Each model is trained for 1000 Adam epochs followed by 4000 epochs using L-BFGS.
 
\emph{Results:   }
The results achieved on the NL-MSD benchmark are shown in Table \ref{tb:NLMSD_results} and illustrated in  Figure \ref{fig:MSD_results}. They show that LPV-LFR models with affine and rational scheduling dependency are capable of achieving accurate results on test data-set. However, given the simple scheduling map \eqref{eq: MSD_schedmap}, the affine models require a higher scheduling dimension $n_\mathrm{p} = 2$ to achieve the same accuracy as the LPV-LFR model with rational scheduling dependency achieves with a scheduling dimension of one. This illustrates the trade-off between complexity in terms of scheduling dimension and complexity in terms of scheduling dependency which is made possible through the well-posed parameterization introduced in this work.

\begin{figure}[!b]
\begin{center}
\includegraphics[width=8.4cm]{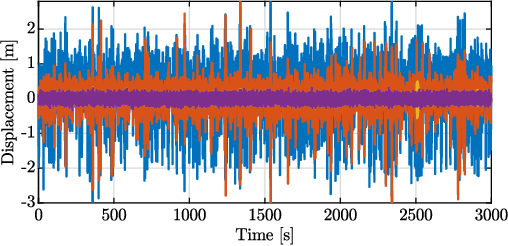}   
\caption{True output (\textcolor{matlabblue}{\raisebox{0.5mm}{\rule{0.5cm}{0.6mm}}}), affine LPV-LFR ($n_\mathrm{p} \!=\! 1$) sim.  error (\textcolor{matlabred}{\raisebox{0.5mm}{\rule{0.5cm}{0.6mm}}}), affine LPV-LFR ($n_\mathrm{p} \!=\! 2$) sim. error (\textcolor{matlabyellow}{\raisebox{0.5mm}{\rule{0.5cm}{0.6mm}}}), and rational LPV-LFR sim. error (\textcolor{matlabpurple}{\raisebox{0.5mm}{\rule{0.5cm}{0.6mm}}}) on the test data of the NL-MSD.} 
\label{fig:MSD_results}
\end{center}
\end{figure}

\begin{table}[!t]
\begin{center}
\caption{Results on the NL-MSD example using LPV-LFR models.}\label{tb:NLMSD_results}
\begin{tabular*}{\columnwidth}{@{\extracolsep{\fill}} l c c c r}
Model  & BFR train& BFR val & BFR test & time \\\hline

True System & 89.47 & 90.19 & 90.15 & - \\
Affine $\!(n_\mathrm{p} \!=\! 1)\!$& 53.56 & 41.78 & 44.90 &   37 s\\
Affine $\!(n_\mathrm{p} \!=\! 2)\!$& 89.56 & 90.20 & 89.87 &   87 s\\
Rational $\!(n_\mathrm{p} \!=\! 1)\!$ & 89.57 & 90.20 & \bf{90.16}  & 112 s \\ \hline
\end{tabular*}
\end{center}
\end{table}

\subsection{Control Moment Gyroscope}
\begin{figure}[!t]
\begin{center}
\includegraphics[width=2.4cm]{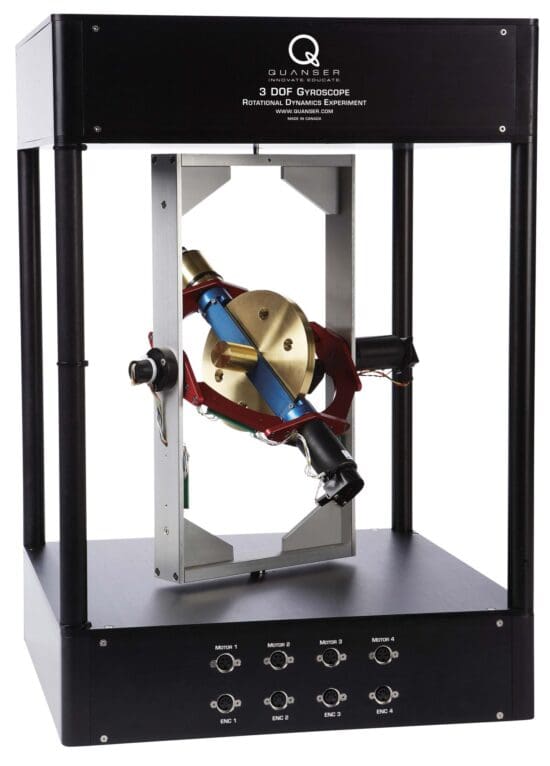}    
\caption{Picture of the CMG considered in example 2.} 
\label{fig: gyroscope}
\end{center}
\end{figure}
As second simulation example, we consider the 3-DOF CMG and compare achieved performance with the methods presented in \cite{verhoek2022} and \cite{bemporad2025b}. The 3-DOF CMG, as shown in Figure \ref{fig: gyroscope}, consists of a disk mounted inside a blue gimbal, which in turn is mounted inside a red gimbal which is mounted inside of silver gimbal. The disk, and gimbals are each equipped with a DC motor for actuation and an encoder to measure the angular positions ($q_1, q_2, q_3$ and $q_4$ for the disk, blue, red, and silver gimbal respectively). During the simulation, the red gimbal is locked at $q_3 = 0$, the blue gimbal $q_2$ is actuated with a motor current $i_2$, and the angular velocity of the silver gimbal $\dot{q}_4$ is measured and considered as output. The velocity of the disk $\dot{q}_1$ is controlled independently, tracking a random multi-level-reference signal with amplitudes between 30 and 50 rad/s and dwell time of 4-8 seconds. The resulting dynamics can, for small angles of $q_2$, be interpreted as a  nonlinear SISO model between the input $i_2$ and output $\dot{q}_4$ with dynamics dependent on the external input $\dot{q}_1$.

\emph{The Data:   }
As \cite{verhoek2022} describes, the input of the first gimbal is excited with a sinusoidal carrier wave with white noise added: $u_k\!\!=\! \!\tfrac{1}{2}\sin(\omega T_sk)\! +\! v_k$ where $v_k \!\sim\! \mathcal{N}(0,\sigma_v^2)$ and $\sigma_v \!=\! \frac{1}{3}$. For each dataset, the frequency of the sinusoidal carrier wave is drawn from a uniform distribution, $\omega \!\sim\! \mathcal{U}(1,2)$. Furthermore, an OE noise structure is considered with error signal $e_k \!\sim\! \mathcal{N}(0,\sigma_e^2)$ with variance $\sigma_e^2 \!=\! 2.2 \cdot 10^{-5}$ corresponding to a SNR of $35$ dB. The velocity of the disk follows the previously specified reference. The data is generated using CT simulation with RK4 integration and is subsequently down-sampled to $100$ Hz. A training set consisting of $10^4$ samples is created as well as a validation set of $3 \cdot 10^4$ samples.

\emph{Model Settings:   }
To compare the performance of the proposed LPV-LFR-based approach with the methods described in \cite{verhoek2022} and \cite{bemporad2025b}, we opt to train models with the same state and scheduling dimension ($n_\mathrm{x} \! = \! 5, n_\mathrm{p} \!=\! 3$).  Rational-dependency and affine-dependency-based models are trained for comparison, both with $\eta = 1$. As scheduling map, 2-hidden layer ResNets with 6 neurons per hidden layer and tanh activation functions are used. Each model is trained for 2000 epochs using Adam, followed by 4000 epochs using L-BFGS.

\emph{Results:   }
The simulation performance of the LTI, affine LPV-LFR and rational LPV-LFR model are shown in Figure \ref{fig:gyro_results} and compared with the results in \cite{bemporad2025b} in Table \ref{tb:CMG_results}.
The results show that for affine models, results similar to existing joint-estimation methods can be achieved, for rational models however, performance is slightly worse. This reduction in performance is attributed to the increased complexity of the optimization problem, resulting form the additional complexity of the rational scheduling dependency. 
\begin{figure}[!b]
\begin{center}
\includegraphics[width=8.4cm]{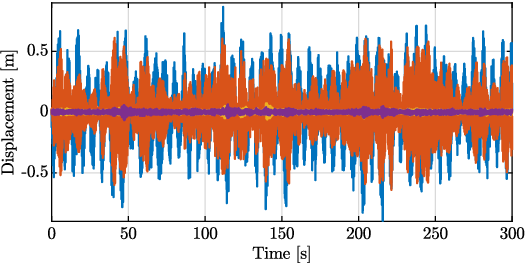}    
\caption{True output (\textcolor{matlabblue}{\raisebox{0.5mm}{\rule{0.5cm}{0.6mm}}}), LTI sim. error (\textcolor{matlabred}{\raisebox{0.5mm}{\rule{0.5cm}{0.6mm}}}), affine LPV-LFR sim. error (\textcolor{matlabyellow}{\raisebox{0.5mm}{\rule{0.5cm}{0.6mm}}}), and rational LPV-LFR sim. error (\textcolor{matlabpurple}{\raisebox{0.5mm}{\rule{0.5cm}{0.6mm}}}) on the validation data of the CMG.} 
\label{fig:gyro_results}
\end{center}
\end{figure}

\begin{table}[!t]
\begin{center}
\caption{Comparison of LPV identification results on the CMG dataset. Previous results are taken from \cite{bemporad2025b} and achieved on different hardware, computation times may differ. The LPV-SUBNET method is described in \cite{verhoek2022}.}\label{tb:CMG_results}
\begin{tabular*}{\columnwidth}{@{\extracolsep{\fill}} l c c c r}
Model  & BFR train & BFR val  & time \\\hline
True System & 98.16 & 98.19  & - \\
Affine LPV-LFR & 97.46 & 96.41  &   172 s\\
Rat. LPV-LFR & 97.12 & 96.02  & 279 s \\
LPV-SS & 97.61 & 96.50  & 47.54 s \\
LPV-SUBNET & 97.28 & 96.40  & $\approx$ 10 h \\ \hline
\end{tabular*}
\end{center}
\end{table}

\section{Conclusion} \label{sec: conclusion}
This work has proposed the use of the linear fractional representation for LPV models in the joint-estimation of LPV models and scheduling maps. A novel parameterization which, by construction, guarantees well-posedness of LPV-LFR models with rational scheduling dependency, has been introduced. The well-posed parameterization has been shown to be capable of accurately modelling nonlinear systems a simulation example, illustrating how the rational scheduling dependency can reduce the required scheduling order.  Finally, estimation of LPV-LFR models with affine and rational-dependency has been compared to state-of-the-art joint-estimation methods for LPV models on a benchmark where LPV-LFR models have been shown to achieve results close to existing methods.

\bibliography{ifacconf} 

@book{mohammadpour2012,
  title = {Control of {{Linear Parameter Varying Systems}} with {{Applications}}},
  editor = {Mohammadpour, Javad and Scherer, Carsten W.},
  year = {2012},
  publisher = {Springer US},
  address = {Boston, MA},
  copyright = {https://www.springernature.com/gp/researchers/text-and-data-mining},
  isbn = {978-1-4614-1832-0 978-1-4614-1833-7},
  langid = {english}
}

@inproceedings{hjartarson2015,
  title = {{{LPVTools}}: {{A Toolbox}} for {{Modeling}}, {{Analysis}}, and {{Synthesis}} of {{Parameter Varying Control Systems}}},
  shorttitle = {{{LPVTools}}},
  author = {Hjartarson, Arnar and Seiler, Peter and Packard, Andrew},
  year = {2015},
  booktitle = {Proc. of the 1st IFAC Workshop on Linear Parameter Varying Systems},
  address = {Grenoble, France},
  pages = {139--145},
  issn = {24058963},
  copyright = {https://www.elsevier.com/tdm/userlicense/1.0/},
  langid = {english}
}

@inproceedings{boef2021,
  author = {Pascal {den Boef} and Pepijn B. Cox and Roland Tóth},
  title = {{LPVcore}: {MATLAB} toolbox for {LPV} modelling, identification and control},
  booktitle = {Proc. of the 19th IFAC Symposium on System Identification},
  pages = {385--390},
  year = {2021},
  issn = {2405-8963}}

@article{vanwingerden2009,
  title = {Subspace Identification of {{Bilinear}} and {{LPV}} Systems for Open- and Closed-Loop Data},
  author = {Van Wingerden, Jan-Willem and Verhaegen, Michel},
  year = {2009},
  journal = {Automatica},
  volume = {45},
  number = {2},
  pages = {372--381},
  issn = {00051098},
  copyright = {https://www.elsevier.com/tdm/userlicense/1.0/},
  langid = {english}
}

@article{cox2021,
  title = {Linear Parameter-Varying Subspace Identification: {{A}} Unified Framework},
  shorttitle = {Linear Parameter-Varying Subspace Identification},
  author = {Cox, Pepijn Bastiaan and T{\'o}th, Roland},
  year = {2021},
  journal = {Automatica},
  volume = {123},
  pages = {109296},
  issn = {00051098},
  langid = {english}
}

@inproceedings{verhoek2022,
  title = {Deep-{{Learning-Based Identification}} of {{LPV Models}} for {{Nonlinear Systems}}},
  booktitle = {Proc. of the 61st {{IEEE}} {{Conference}} on {{Decision}} and {{Control}} },
  author = {Verhoek, Chris and Beintema, Gerben I. and Haesaert, Sofie and Schoukens, Maarten and Toth, Roland},
  year = {2022},
  pages = {3274--3280},
  address = {Cancun, Mexico},
  langid = {english}
}

@article{bemporad2025a,
  title={An {L-BFGS-B} Approach for Linear and Nonlinear System Identification Under $\ell_1$ and Group-Lasso Regularization},
  author={Bemporad, Alberto},
  journal={IEEE Transactions on Automatic Control},
  year={2025},
  publisher={IEEE}
}

@article{zhou1996,
  title={Robust and Optimal Control},
  author={Zhou, K and Doyle, JC and Glover, K},
  journal={Control Engineering Practice},
  volume={4},
  number={8},
  pages={1189--1190},
  year={1996},
  publisher={Elsevier Science Publishing Company, Inc.}
}

@article{bemporad2025b,
  title = {Efficient Identification of Linear, Parameter-Varying, and Nonlinear Systems with Noise Models},
  author = {Bemporad, Alberto and T{\'o}th, Roland},
  year = {2025},
  journal = {arXiv preprint arXiv:2504.11982},
  eprint = {2504.11982},
  primaryclass = {math},
  publisher = {arXiv},
  archiveprefix = {arXiv},
  langid = {english}
}

@article{schererLMIbook,
  title={Linear Matrix Inequalities in Control},
  author={Scherer, Carsten and Weiland, Siep},
  year={2015},
  journal={Lecture Notes}
}

@article{revay2020,
  title = {Lipschitz {{Bounded Equilibrium Networks}}},
  author = {Revay, Max and Wang, Ruigang and Manchester, Ian R.},
  year = {2020},
  journal = {arXiv preprint arXiv:2010.01732},
  eprint = {2010.01732},
  primaryclass = {cs},
  publisher = {arXiv},
  archiveprefix = {arXiv},
  langid = {english}
}

@article{winston2020,
  title = {Monotone Operator Equilibrium Networks},
  author = {Winston, Ezra and Kolter, J Zico},
  year = {2020},
  journal = {Advances in neural information processing systems},
  volume = {33},
  pages = {10718--10728},
  langid = {english}
}

@inproceedings{glorot2010,
  title={Understanding the difficulty of training deep feedforward neural networks},
  author={Glorot, Xavier and Bengio, Yoshua},
  booktitle={Proc. of the 13th {International} {Conference} on {Artificial} {Intelligence} and {Statistics}},
  pages={249--256},
  address = {Sardinia, Italy},
  year={2010},
  organization={JMLR Workshop and Conference Proceedings}
}

@inproceedings{hoffmann2014,
  title = {Complexity of {{Implementation}} and {{Synthesis}} in {{Linear Parameter-Varying Control}}},
  author = {Hoffmann, Christian and Werner, Herbert},
  year = {2014},
  booktitle = {Proc. 19th {IFAC} World Congr.},
  address = {Cape Town, South Africa},
  issn = {14746670},
  copyright = {https://www.elsevier.com/tdm/userlicense/1.0/},
  langid = {english}
}

@inproceedings{nemani1995,
  title = {Identification of Linear Parametrically Varying Systems},
  booktitle = {Proc. of the 34th {{IEEE Conference}} on {{Decision}} and {{Control}}},
  author = {Nemani, M. and Ravikanth, R. and Bamieh, B.A.},
  year = {1995},
  pages = {2990--2995},
  address = {New Orleans, LA, USA},
  isbn = {978-0-7803-2685-9},
  langid = {english}
}

@article{lee1999,
  title = {Identification of {{Linear Parameter-Varying Systems Using Nonlinear Programming}}},
  author = {Lee, Lawton H. and Poolla, Kameshwar},
  year = {1999},
  journal = {Journal of Dynamic Systems, Measurement, and Control},
  volume = {121},
  number = {1},
  pages = {71--78},
  langid = {english}
}

@inproceedings{cheng2015,
  title = {Identification of {{LPV}} Systems with {{LFT}} Parametric Dependence via Convex Optimization},
  booktitle = {Proc. of the 54th {{IEEE Conference}} on {{Decision}} and {{Control}}},
  author = {Cheng, Yongfang and Sznaier, Mario},
  year = {2015},
  pages = {1459--1464},
  address = {Osaka, Japan},
  isbn = {978-1-4799-7886-1},
  langid = {english}
}

@inproceedings{mejari2019,
  title = {Kernelized {{Identification}} of {{Linear Parameter-Varying Models}} with {{Linear Fractional Representation}}},
  booktitle = {Proc. of the 18th {{European Control Conference}} },
  author = {Mejari, Manas and Piga, Dario and Toth, Roland and Bemporad, Alberto},
  year = {2019},
  pages = {337--342},
  address = {Naples, Italy},
  langid = {english}
}

@inproceedings{he2016,
  title = {Deep {{Residual Learning}} for {{Image Recognition}}},
  booktitle = {Proc. of the {{IEEE Conference}} on {{Computer Vision}} and {{Pattern Recognition}}},
  author = {He, Kaiming and Zhang, Xiangyu and Ren, Shaoqing and Sun, Jian},
  year = {2016},
  pages = {770--778},
  address = {Las Vegas, USA},
  langid = {english}
}

@misc{jax2018github,
  author = {James Bradbury and Roy Frostig and Peter Hawkins and Matthew James Johnson and Chris Leary and Dougal Maclaurin and George Necula and Adam Paszke and Jake VanderPlas and Skye Wanderman-Milne and Qiao Zhang},
  title = {{JAX}: composable transformations of Python+NumPy programs},
  year = {2018},
  howpublished = {\url{https://github.com/google/jax}},
}

@article{liu1989,
  title = {On the Limited Memory {{BFGS}} Method for Large Scale Optimization},
  author = {Liu, Dong C. and Nocedal, Jorge},
  year = {1989},
  journal = {Mathematical Programming},
  volume = {45},
  number = {1-3},
  pages = {503--528},
  issn = {0025-5610, 1436-4646},
  copyright = {http://www.springer.com/tdm},
  langid = {english}
}

@book{toth2010,
  title = {Modeling and {{Identification}} of {{Linear Parameter-Varying Systems}}},
  author = {T{\'o}th, Roland},
  year = 2010,
  series = {Lecture {{Notes}} in {{Control}} and {{Information Sciences}}},
  volume = {403},
  publisher = {Springer},
  address = {Berlin, Heidelberg},
  copyright = {http://www.springer.com/tdm},
  isbn = {978-3-642-13811-9 978-3-642-13812-6},
  langid = {english}
}

@inproceedings{verhoek2023,
  title = {Learning {{Stable}} and {{Robust Linear Parameter-Varying State-Space Models}}},
  booktitle = {Proc. of the 62nd {{IEEE Conference}} on {{Decision}} and {{Control}}},
  author = {Verhoek, Chris and Wang, Ruigang and T{\'o}th, Roland},
  year = 2023,
  month = dec,
  eprint = {2304.01828},
  primaryclass = {eess},
  pages = {1348--1353},
  langid = {english},
  address = {Singapore},
}

@article{bianchi2010,
  title={Robust identification/invalidation in an {LPV} framework},
  author={Bianchi, Fernando D and S{\'a}nchez-Pe{\~n}a, Ricardo S},
  journal={International Journal of Robust and Nonlinear Control},
  volume={20},
  number={3},
  pages={301--312},
  year={2010},
  publisher={Wiley}
}

@article{scherer2001,
  title = {{{LPV}} Control and Full Block Multipliers},
  author = {Scherer, C W},
  year = 2001,
  journal = {Automatica},
  langid = {english},
  volume = 37,
  pages = {361--375}
}
\appendix

\end{document}